\documentclass[letterpaper]{IEEEtran}

\ifCLASSINFOpdf
\else
   \usepackage[dvips]{graphicx}
\fi
\usepackage{url}

\usepackage{graphicx}

\usepackage{macros}
\usepackage{cleveref}

\begin{document}

\title{On the Real
Part of Butterfly Factorizations,
\\with Application to Fast Transforms}

\author{\IEEEauthorblockN{Maël~Chaumette, Rémi~Gribonval and Elisa~Riccietti}%
\thanks{Manuscript received ...}%
\thanks{This project was supported by the SHARP project of the PEPR-IA (ANR-23-PEIA-0008, granted by France 2030), and the project ANR MEPHISTO ANR-24-CE23-7039.}%
\thanks{Maël Chaumette and Rémi Gribonval are with Inria, CNRS, ENS de Lyon, Université Claude Bernard Lyon 1, LIP, UMR 5668, 69342, France. (e-mail: mael.chaumette@inria.fr; remi.gribonval@inria.fr)}%
\thanks{Elisa Riccietti is with ENS de Lyon, CNRS, Inria, Université Claude Bernard Lyon 1, LIP, UMR 5668, 69342, France. (e-mail: elisa.riccietti@ens-lyon.fr)}}

\markboth{IEEE Signal Processing Letters,~Vol.~XX, No.~XX, 2026}%
{Shell \MakeLowercase{\textit{et al.}}: Bare Demo of IEEEtran.cls for IEEE Journals}
\maketitle

\begin{abstract}
    Fast transforms such as the Discrete Cosine Transform can be expressed as the real part of a product of complex-valued structured sparse matrices known as Kronecker-sparse factors.
    More generally, products of such structured sparse matrices appear in square dyadic butterfly and monarch matrices, which are known to represent many fast transforms efficiently.
    Yet, it has remained unclear whether taking the real or the imaginary part of such products preserves this structure.
    In this letter, we give a condition on the factors to ensure that the structure is preserved.
    This condition is satisfied in the case of the Discrete Cosine Transform, the Discrete Sine Transform and the Discrete Hartley Transform, allowing us to provide a new representation of these transforms as products of real-valued Kronecker-sparse factors, which opens new avenues for efficient GPU implementations using low-precision numerical formats.
\end{abstract}

\begin{IEEEkeywords}
    Fast transforms, Discrete Cosine Transform, sparse factorization, structured matrices.
\end{IEEEkeywords}


\section{Introduction}

\IEEEPARstart{F}{ast} transforms are a fundamental tool in signal processing and machine learning, enabling efficient computation of linear operations.
A given linear operator is usually made fast by expressing it as a product of sparse factors, which can be efficiently stored and implemented.
The use of such sparse factorizations dates back to the Fast Fourier Transform (FFT) algorithm \cite{cooley1965algorithm}, which can be interpreted as a factorization into factors with a so-called {\em butterfly} structure, reducing costs from quadratic to near-linear in the matrix dimension.
Butterfly structures also appear in fast transforms such as the Discrete Cosine/Sine/Hartley Transform (DCT/DST/DHT) \cite{ahmed1974discrete,makhoul1980Fast,bracewell1984fast,sorensen1985computing}, or the Fast Noiselet transform \cite{coifman2001noiselets}, as well as in hierarchical methods \cite{michielssen1996multilevel}, in the fast evaluation of Fourier integral operators \cite{candes2009fast}, in learning or approximating linear operators \cite{le2016flexible} and in deep learning \cite{dao2019learning}, \cite{dao2020kaleidoscope}, \cite{chen2021pixelated}.
Recently, a general class of structured sparse matrices, called Kronecker-sparse factors \cite{le2025butterfly}, introduced to generalize the butterfly factorization \cite{li2015butterfly}, has been shown to express many families of efficient matrices (see Example \ref{ex:ksf}) while being endowed with provably good factorization algorithms.

The benefits of sparse factorizations in terms of computation time are well-known on CPUs \cite{puschel2005spiral}, \cite{le2016flexible}.
On GPUs, the situation is more subtle but rapidly evolving: while irregular sparsity patterns are not well-suited and make the theoretical speed-up difficult to achieve in practice \cite{chen2021pixelated}, \cite{hoefler2021sparsity}, recent works with block structures or dedicated kernels start to close this gap \cite{dao2022monarch}, \cite{gonon2025fast}.
Furthermore, the availability of low-precision \texttt{float} formats on GPUs offers additional avenues for reducing memory footprint and improving computational efficiency.

Despite becoming standard, low-precision formats on GPUs are however currently not available for complex numbers.
As the implementation of many (real-valued) fast transforms such as the DCT, DST or DHT relies on their expression using the real (or the imaginary) part of the FFT, this involves complex-valued Kronecker-sparse factors which cannot benefit from standard low-precision format.
This raises the question of whether the real (or imaginary) part of Kronecker-sparse factorizations is still a Kronecker-sparse factorization {\em with real-valued entries}.
To the best of our knowledge, this has not been addressed in the literature.
The contribution of this letter is, under a choice of permutation matrices, to give a necessary and sufficient condition on the factors to ensure that the real or the imaginary part of a product of Kronecker-sparse factors is still a product of Kronecker-sparse factors.
This allows us to provide a new representation of the DCT, DST and DHT as a factorization with real-valued sparse matrices, unlocking the use of low-precision format.

\section{Notations and background}

We denote by $\Re(\cdot)$ and $\Im(\cdot)$ the real and imaginary parts of a complex number, respectively.
Vectors (resp. matrices) are denoted by lowercase (resp. uppercase) bold letters.
$\mOne_{m \times n}$ denotes the matrix of size $m \times n$ with all entries equal to $1$.
$\mI_n$ denotes the identity matrix of size $n \times n$.
Finally, for $a \leq b \in \bbN$, we denote by $\llbracket a, b \rrbracket$ the set of integers $\{ a, a+1, \ldots, b \}$.

Kronecker-sparse factors are defined by their support, i.e., the set of indices where the matrix has non-zero entries.

\begin{definition}[Kronecker-sparse supports and their (sparsity) patterns {\cite[Definition~4.1]{le2025butterfly}}]\label{def:ksf}
    For $a, b, c, d \in \bbN$, a {\em Kronecker-sparse factor} of (sparsity) {\em pattern} $\vpi := (a, b, c, d)$ is a matrix in $\bbR^{a b d \times a c d}$ or $\bbC^{a b d \times a c d}$ such that its support is included in the support of $\mS_{\vpi} := \mI_a \otimes \mOne_{b \times c} \otimes \mI_d \in \{ 0, 1 \}^{a b d \times a c d}$.
    The set of Kronecker-sparse factors with pattern $\vpi$ is denoted by $\cK_{\vpi}$.
\end{definition}

\figurename~\ref{fig:action-perm} shows two examples of Kronecker-sparse factors.

\begin{example}\label{ex:ksf}
    Such factors are the basic blocks to build many matrices with efficient implementations:
    \begin{itemize}
        \item {\em Square dyadic butterflies} \cite{dao2019learning,zheng2023efficient} are related to many fast transforms, such as the FFT \cite{cooley1965algorithm} and the Fast Walsh-Hadamard Transform \cite{fino1976unified}, which are $n\times n$ matrices with $n=2^L$ that can be written in terms of a product $\mK_{1}\ldots \mK_{L}$ where $\mK_{\ell} \in \cK_{\vpi_{\ell}}$, $\ell \in \llbracket 1, L \rrbracket$ and $\vpi_\ell =(2^{\ell - 1}, 2, 2, 2^{L - \ell})$.
        \item {\em Monarch factorizations}  \cite{dao2022monarch} for weight matrices in transformers (for vision or language tasks) have led to speed-ups of training and inference time. They express an $m \times n$ matrix
        as a product $\mK_{1} \mK_{2}$ with $\mK_{\ell} \in \cK_{\vpi_{\ell}}$, $\ell =1,2$, where
        $\vpi_1 = (1, p, q, m / p)$ and $\vpi_2 = (q, m / p, n / q, 1)$ assuming that $p, q$ divide $m, n$ respectively.
    \end{itemize}
    General matrix products $\mK_{1}\cdots \mK_{L}$ with $\mK_{\ell} \in\cK_{\vpi_{\ell}}$, $\ell \in \llbracket 1, L \rrbracket$ for some patterns $\vpi_{\ell}$ are called {\em butterfly factorizations} \cite{zheng2023efficient} and $(\vpi_1, \ldots, \vpi_L)$ is called the {\em architecture} of the factorization.
\end{example}

\section{Main result}
\label{sec:main}

In this section, we will prove\footnote{All definitions and proofs of this letter were formalized in the Lean proof assistant. The code is available in the supplementary material.} that, under some condition on the architecture, if a matrix admits a butterfly factorization $\mA=\mK_{1}\ldots \mK_{L}$, then the matrix $\mA' := \Re(\mA)$ also admits a butterfly factorization with a different architecture.
We first show that $\Re(\mA)$ is expressed as a product of $L$ matrices {\em without} the Kronecker-sparse structure, then we apply well-chosen permutations to recover a Kronecker-sparse factorization.

\subsection{Pre-factorization of the real part}

When $L=1$ (i.e., $\mA := \mK_1$), as being a Kronecker-sparse factor only depends on the support of $\mA$, $\Re(\mA)$ and $\Im(\mA)$ are also Kronecker-sparse factors with the same pattern.
For $L \geq 2$, simple linear algebra yields $\Re(\mA) = \overline{\mK}_1 \cdots \overline{\mK}_L$ where
\begin{equation}\label{eq:concat_ksf}
    \begin{split}
        &\overline{\mK}_1 := \begin{pmatrix}
            \Re(\mK_1) & \Im(\mK_1)
        \end{pmatrix}\\
        &\overline{\mK}_\ell := \begin{pmatrix}
            \Re(\mK_\ell) & \Im(\mK_\ell)\\
            -\Im(\mK_\ell) & \Re(\mK_\ell)
        \end{pmatrix}, \quad 2 \leq \ell \leq L - 1\\
        &\overline{\mK}_L := \begin{pmatrix}
            \Re(\mK_L)\\
            -\Im(\mK_L)
        \end{pmatrix}\,.
    \end{split}
\end{equation}
However, these factors $\overline{\mK}_\ell$, $\ell \in \llbracket 1, L \rrbracket$, are not Kronecker-sparse as illustrated in the middle of \figurename~\ref{fig:action-perm} with the support of $\overline{\mK}_{L}$ when $\vpi_{L}=(2,2,2,3)$.

\subsection{Permutations to recover a Kronecker-sparse factorization}
Considering any permutation matrices $\mP_2, \ldots, \mP_L$, we have
\begin{equation*}
    \overline{\mK}_1 \cdots \overline{\mK}_L = \widetilde{\mK}_1 \cdots \widetilde{\mK}_L
\end{equation*}
with the permuted factors
\begin{equation}\label{eq:correct_ksf}
    \begin{split}
        &\widetilde{\mK}_1 := \overline{\mK}_1 \mP_2^\top\\
        &\widetilde{\mK}_\ell := \mP_\ell \overline{\mK}_\ell \mP_{\ell + 1}^\top, \quad 2 \leq \ell \leq L - 1\\
        &\widetilde{\mK}_L := \mP_L \overline{\mK}_L\,.
    \end{split}
\end{equation}
Can we find permutations and ``maximally sparse'' patterns $\vpi'_{\ell}, 1 \leq \ell \leq L$ such that $\widetilde{\mK}_{\ell} \in \cK_{\vpi'_{\ell}}$, $1 \leq \ell \leq L$?
This is what we investigate next. Quite naturally, based on a count of the number of nonzero entries per row/column of each factor (for example, $\overline{\mK}_1$ has twice the number of columns of $\mK_{1}$, and the sparsity of each column is unchanged), we end up targeting the following sparsity patterns:
\begin{equation}\label{eq:new_chain}
    \vpi'_\ell := \begin{cases}
        (a_1, b_1, 2 c_1, d_1) & \text{if } \ell = 1 \\
        (a_L, 2 b_L, c_L, d_L) & \text{if } \ell = L \\
        (a_\ell, 2 b_\ell, 2 c_\ell, d_\ell) & \text{otherwise.}
    \end{cases}
\end{equation}

\subsection{Finding the right permutation matrices}
To satisfy $\widetilde{\mK}_{\ell} \in \cK_{\vpi'_{\ell}}$ {\em irrespective of the values of the entries} of $\mK_1, \ldots, \mK_L$, the permutation matrices $\mP_\ell$ have to satisfy:
\begin{align}
    &\mP_2 \cdot \begin{pmatrix}
        \mS_{\vpi_1}^\top\\
        \mS_{\vpi_1}^\top
    \end{pmatrix} = \mS_{(a_1, b_1, 2 c_1, d_1)}^\top = \mS_{(a_1, 2 c_1, b_1, d_1)} \label{eq:cond1}\\
    &\mP_\ell \cdot \begin{pmatrix}
        \mS_{\vpi_\ell} & \mS_{\vpi_\ell}\\
        \mS_{\vpi_\ell} & \mS_{\vpi_\ell}
    \end{pmatrix} \cdot \mP_{\ell + 1}^\top = \mS_{(a_\ell, 2 b_\ell, 2 c_\ell, d_\ell)}, \quad 2 \leq \ell \leq L - 1 \label{eq:cond2}\\
    &\mP_L \cdot \begin{pmatrix}
        \mS_{\vpi_L}\\
        \mS_{\vpi_L}
    \end{pmatrix} = \mS_{(a_L, 2 b_L, c_L, d_L)}\,. \label{eq:cond3}
\end{align}
We will use a particular choice of permutation matrices to satisfy \eqref{eq:cond1}, \eqref{eq:cond2} and \eqref{eq:cond3}.
Potentially, other choices of permutation matrices could also satisfy these conditions, but we do not explore this question in this letter.
We propose to use $\mP_{\ell} := \mP_{\vpi_{\ell}}$ where, given any pattern $\vpi = (a, b, c, d)$, we define
\begin{equation}\label{eq:permutation}
    \mP_{\vpi} := \mC_{a, 2} \otimes \mI_{b d}
\end{equation}
with $\mC_{m, n} \in \bbR^{m n \times m n}$ the so-called  {\em commutation matrix} \cite{magnus1979commutation}, also known as {\em $(m,n)$-perfect shuffle} \cite{van2000ubiquitous}.

\begin{definition}[Commutation matrix {\cite{magnus1979commutation}}]
    The commutation matrix $\mC_{m,n}$ is the permutation matrix such that, for any matrix $\mM \in \bbR^{m \times n}$, $\mC_{m, n} \vect(\mM) = \vect(\mM^\top)$, where $\vect(\cdot)$ denotes the vectorization operator stacking the columns of $\mM$. It corresponds to the following permutation $\sigma$ of the set $\llbracket 1, m n \rrbracket$
    \begin{equation*}
        \sigma(i + m(j - 1)) = j + n (i - 1), \quad i \in \llbracket 1, m \rrbracket, j \in \llbracket 1, n \rrbracket\,.
    \end{equation*}
\end{definition}

\begin{figure}[!t]
    \centering
    \includegraphics[width=0.9\columnwidth]{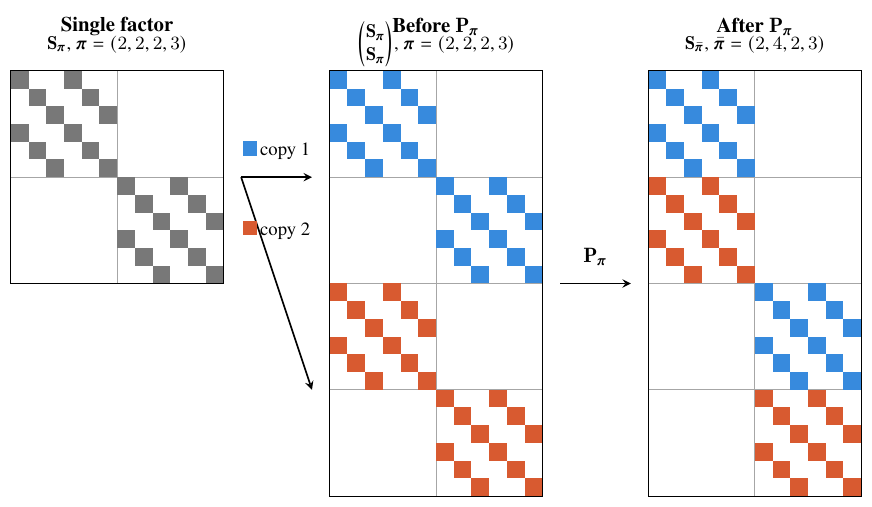}
    \caption{({\em Left}) Example of the support of a Kronecker-sparse factor with pattern $\vpi = (2, 2, 2, 3)$.
    Here, the non-zero entries are represented by the colored squares.
    ({\em Middle}) The support of the concatenated Kronecker-sparse factors $\begin{pmatrix} \mS_{\vpi}\\ \mS_{\vpi} \end{pmatrix}$, which is not a Kronecker-sparse factor.
    ({\em Right}) The support of the Kronecker-sparse factor with pattern $\overline{\vpi} = (2, 4, 2, 3)$ obtained by applying the permutation matrix $\mP_{\vpi}$ on the left to the previous matrix.}
    \label{fig:action-perm}
\end{figure}

In particular $\mC_{a, 2}$ applied to a matrix permutes its rows by grouping on top the rows of indices $i = 0$ and $i = a$, then the rows of indices $i = 1$ and $i = a + 1$, and so on.
By construction, the matrix $\mP_{\vpi}$ acts similarly but on groups of $b d$ consecutive rows instead of single rows.
Hence, applying $\mP_{\vpi}$ on the left to the concatenated Kronecker-sparse factors appearing in \eqref{eq:cond1}, \eqref{eq:cond2} and \eqref{eq:cond3} with pattern $\vpi$ will group the rows to obtain a Kronecker-sparse factor with pattern $\overline{\vpi} := (a, 2 b, c, d)$.
This is illustrated in \figurename~\ref{fig:action-perm} for $\vpi = (2, 2, 2, 3)$ and formalized in the following lemma.
\begin{lemma}\label{lem:effect_perm}
    Given $\vpi = (a, b, c, d)$, the matrix $\mP_{\vpi}$ satisfies
    \begin{equation*}
        \mP_{\vpi} \cdot \begin{pmatrix}
            \mS_{\vpi}\\
            \mS_{\vpi}
        \end{pmatrix} = \mS_{\overline{\vpi}} \in \bbR^{2 a b d \times a c d}\,,
    \end{equation*}
    where $\overline{\vpi} := (a, 2 b, c, d)$.
\end{lemma}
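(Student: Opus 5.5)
The plan is to prove the identity by rewriting the stacked matrix as a Kronecker product and then using the standard commutation rule for Kronecker products.

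\textbf{Step 1 (stacked matrix as a Kronecker product).} Stacking two copies of a matrix $\mM$ vertically gives $\mOne_{2\times 1}\otimes \mM$. Hence
\[
\begin{pmatrix}\mS_{\vpi}\\ \mS_{\vpi}\end{pmatrix}=\mOne_{2\times 1}\otimes \mI_a\otimes \mOne_{b\times c}\otimes \mI_d .
\]

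\textbf{Step 2 (commutation rule).} We use the standard property of the commutation matrix from \cite{magnus1979commutation}: for $\mX\in\bbR^{p\times q}$ and $\mY\in\bbR^{r\times s}$,
\[
\mC_{r,p}(\mX\otimes\mY)\mC_{q,s}=\mY\otimes\mX .
\]
If $\mC_{m,n}$ is defined with the opposite index convention, the indices are simply swapped. The plan is to verify this at the level of the index permutation $\sigma$ given in the definition: the $(i,j)$-th block structure of $\mX\otimes\mY$ is exchanged.

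Apply it with $\mX=\mOne_{2\times1}$ (so $p=2$, $q=1$) and $\mY=\mI_a$ (so $r=s=a$). Since $\mC_{1,a}=\mI_a$, this gives
\[
\mC_{a,2}(\mOne_{2\times 1}\otimes \mI_a)=\mI_a\otimes \mOne_{2\times 1}.
\]
Checking the index convention $\mC_{a,2}$ versus $\mC_{2,a}$ against the definition, where $\sigma(i+a(j-1))=j+2(i-1)$, is the main bookkeeping obstacle. Concretely, the row of $\mOne_{2\times1}\otimes\mI_a$ with index $i+a(j-1)$ has its $1$ in column $i$; after the permutation it sits in row $j+2(i-1)$, which is exactly the row of $\mI_a\otimes\mOne_{2\times1}$ having its $1$ in column $i$.

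\textbf{Step 3 (tensoring on the right).} Tensor both sides on the right by $\mOne_{b\times c}\otimes\mI_d$, and use the mixed-product rule $(\mA\otimes\mB)(\mC\otimes\mD)=\mA\mC\otimes\mB\mD$ with $\mP_{\vpi}=\mC_{a,2}\otimes\mI_{bd}$ acting as $\mI_{bd}$ on that right factor. This yields
\[
\mP_{\vpi}\begin{pmatrix}\mS_{\vpi}\\ \mS_{\vpi}\end{pmatrix}=\mI_a\otimes\mOne_{2\times1}\otimes\mOne_{b\times c}\otimes\mI_d .
\]

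\textbf{Step 4 (conclusion).} Finally, $\mOne_{2\times1}\otimes\mOne_{b\times c}=\mOne_{2b\times c}$. Therefore the right-hand side equals $\mI_a\otimes\mOne_{2b\times c}\otimes\mI_d=\mS_{\overline{\vpi}}$, and its dimensions are $2abd\times acd$, as claimed.
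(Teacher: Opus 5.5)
Your proof is correct and follows the paper's argument essentially step for step. You write the stacked matrix as $\mOne_{2\times 1}\otimes\mI_a\otimes\mOne_{b\times c}\otimes\mI_d$, use the Magnus--Neudecker commutation identity with $\mC_{1,a}=\mI_a$ to get $\mC_{a,2}(\mOne_{2\times 1}\otimes\mI_a)=\mI_a\otimes\mOne_{2\times 1}$, and conclude with the mixed-product rule. Your extra index-level check of that identity against the permutation $\sigma$ confirms that $\mC_{a,2}$, rather than $\mC_{2,a}$, is the right convention.
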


\begin{IEEEproof}
    A key property \cite[Theorem~3.1]{magnus1979commutation} of $\mC_{m,n}$ is the identity $\mA \otimes \mB = \mC_{m,n}(\mB \otimes \mA)\mC_{p,q}^{\top}$ for every $\mA \in \bbC^{m \times p}$ and $\mB \in \bbC^{n \times q}$, with $\otimes$ the Kronecker product.
    Since $\mC_{p,1}=\mI_{p}$ for every $p$ we deduce when specializing to $m=a$, $n=2$ and $q=1$ that
    \begin{equation}\label{eq:CommKron}
        \mA \otimes \mB = \mC_{a,2}(\mB \otimes \mA), \quad
        \forall \mA \in \bbC^{a \times p}, \mB \in \bbC^{2 \times 1}\,.
    \end{equation}
    Combined with~\Cref{def:ksf}, \eqref{eq:permutation} and the mixed-product property ($(\mA \otimes \mB) (\mC \otimes \mD) = (\mA \mC) \otimes (\mB\mD)$ for appropriate dimensions  \cite{van2000ubiquitous}) of the Kronecker product, we obtain
    \begin{IEEEeqnarray*}{rCl+x*}
        \mP_{\vpi} \cdot \begin{pmatrix}
            \mS_{\vpi}\\
            \mS_{\vpi}
        \end{pmatrix}
            &=& (\mC_{a, 2} \otimes \mI_{b d}) \cdot (\mOne_{2 \times 1} \otimes \mI_a \otimes \mOne_{b \times c} \otimes \mI_d)\\
            &=& (\mC_{a, 2} \cdot (\mOne_{2 \times 1} \otimes \mI_a)) \otimes (\mI_{b d} (\mOne_{b \times c} \otimes \mI_d))\\
            &=& (\mI_a \otimes \mOne_{2 \times 1}) \otimes (\mOne_{b \times c} \otimes \mI_d)\\
            &=& \mI_a \otimes \underbrace{(\mOne_{2 \times 1} \otimes \mOne_{b \times c})}_{= \mOne_{2 b \times c}} \otimes \mI_d = \mS_{\overline{\vpi}}\,. \hfill \IEEEQEDhere
    \end{IEEEeqnarray*}
\end{IEEEproof}

We now characterize when $\mP_{\ell}:=\mP_{\vpi_{\ell}}$ satisfy \eqref{eq:cond1}-\eqref{eq:cond2}-\eqref{eq:cond3}.

\begin{lemma}
    For any architecture $(\vpi_1, \ldots, \vpi_L)$ with $\vpi_\ell = (a_\ell, b_\ell, c_\ell, d_\ell)$, $\ell \in \llbracket 1, L \rrbracket$, \eqref{eq:cond3} is satisfied with $\mP_L := \mP_{\vpi_L}$.
    Besides, \eqref{eq:cond1} and \eqref{eq:cond2} hold with $\mP_\ell := \mP_{\vpi_\ell}$ if, and only if, for every $\ell \in \llbracket 1, L - 1 \rrbracket$, we have
    \begin{equation}\label{eq:cond_final}
        \begin{split}
            &(\mC_{a_{\ell + 1}, 2} \otimes \mI_{b_{\ell + 1} d_{\ell + 1}}) \cdot\\
            &\quad\left( \mOne_{2 \times 1} \otimes \mI_{a_\ell} \otimes \mOne_{c_\ell \times b_\ell'} \otimes \mI_{d_\ell} \right) = \mI_{a_\ell} \otimes \mOne_{2 c_\ell \times b_\ell'} \otimes \mI_{d_\ell}\,,
        \end{split}
    \end{equation}
    where $b_\ell' = b_\ell$ if $\ell = 1$ and $b_\ell' = 2 b_\ell$ otherwise.
\end{lemma}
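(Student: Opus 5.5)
The first claim is immediate: \eqref{eq:cond3} is exactly \Cref{lem:effect_perm} applied to $\vpi = \vpi_L$, since $\mP_L = \mP_{\vpi_L}$ and $\overline{\vpi_L} = (a_L, 2b_L, c_L, d_L)$. For the equivalence, I will rewrite \eqref{eq:cond1} and \eqref{eq:cond2} so that each reduces to a single identity involving only $\mP_{\ell+1}$, and that identity will be \eqref{eq:cond_final}.

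For $2 \le \ell \le L-1$, I first note that
\begin{equation*}
\begin{pmatrix} \mS_{\vpi_\ell} & \mS_{\vpi_\ell} \\ \mS_{\vpi_\ell} & \mS_{\vpi_\ell} \end{pmatrix} = \begin{pmatrix} \mS_{\vpi_\ell} \\ \mS_{\vpi_\ell} \end{pmatrix} \mOne_{1\times 2} \otimes \mI_{a_\ell c_\ell d_\ell}.
\end{equation*}
Hence, by \Cref{lem:effect_perm}, left multiplication by $\mP_\ell$ turns this block matrix into $\begin{pmatrix} \mS_{(a_\ell,2b_\ell,c_\ell,d_\ell)} & \mS_{(a_\ell,2b_\ell,c_\ell,d_\ell)} \end{pmatrix}$. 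Because $\mP_{\ell+1}$ is orthogonal, \eqref{eq:cond2} is equivalent to
\begin{equation*}
\mP_{\ell+1} \begin{pmatrix} \mS_{(a_\ell,2b_\ell,c_\ell,d_\ell)}^\top \\ \mS_{(a_\ell,2b_\ell,c_\ell,d_\ell)}^\top \end{pmatrix} = \mS_{(a_\ell,2b_\ell,2c_\ell,d_\ell)}^\top .
\end{equation*}
This is obtained by transposing both sides and multiplying by $\mP_{\ell+1}$. For $\ell = 1$, \eqref{eq:cond1} already has this form, with $b_1$ in place of $2b_1$.

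It then remains to write both sides of this identity in Kronecker form. I use $\mS_{(a,b,c,d)}^\top = \mI_a \otimes \mOne_{c\times b} \otimes \mI_d$ together with the stacking identity $\begin{pmatrix}\mM\\ \mM\end{pmatrix} = \mOne_{2\times 1}\otimes \mM$. The left-hand side becomes $\mP_{\vpi_{\ell+1}}(\mOne_{2\times 1}\otimes \mI_{a_\ell}\otimes \mOne_{c_\ell\times b'_\ell}\otimes \mI_{d_\ell})$, and the right-hand side becomes $\mI_{a_\ell}\otimes \mOne_{2c_\ell\times b'_\ell}\otimes \mI_{d_\ell}$. Substituting the definition \eqref{eq:permutation} of $\mP_{\vpi_{\ell+1}}$ gives exactly \eqref{eq:cond_final}.

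Each step is reversible: the reduction uses only the invertibility of the permutations and the exact identity from \Cref{lem:effect_perm}. The equivalence therefore holds for each $\ell$ separately, and so for all $\ell \in \llbracket 1, L-1\rrbracket$ jointly. The main obstacle is bookkeeping rather than anything conceptual. In the block factorization used for \eqref{eq:cond2}, I have to check that the dimensions match, namely that $\mP_{\ell+1}$ has size $2a_\ell c_\ell d_\ell$. This follows from the implicit compatibility $a_\ell c_\ell d_\ell = a_{\ell+1}b_{\ell+1}d_{\ell+1}$ of the architecture. I also have to verify that the transpose of the stacked pair is the concatenation used in the reduction.
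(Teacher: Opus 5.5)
Your proposal is correct and follows essentially the same route as the paper. You get \eqref{eq:cond3} from \Cref{lem:effect_perm}, then use the same lemma to collapse $\mP_{\vpi_\ell}$ acting on the $2\times 2$ block in \eqref{eq:cond2}, transpose to reach a single identity in $\mP_{\vpi_{\ell+1}}$ (which \eqref{eq:cond1} already is for $\ell=1$), and read off \eqref{eq:cond_final} in Kronecker form. One cosmetic fix: your block identity needs parentheses, $\begin{pmatrix}\mS_{\vpi_\ell}\\ \mS_{\vpi_\ell}\end{pmatrix}(\mOne_{1\times 2}\otimes \mI_{a_\ell c_\ell d_\ell})$, and the passage to the transposed form needs only transposition, not orthogonality of $\mP_{\ell+1}$.
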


\begin{IEEEproof}
    First, as a consequence of Lemma~\ref{lem:effect_perm}, we have
    \begin{equation*}
        \mP_{\vpi_\ell} \cdot \begin{pmatrix}
            \mS_{\vpi_\ell}\\
            \mS_{\vpi_\ell}
        \end{pmatrix} = \mS_{\overline{\vpi}_\ell}, \quad 2 \leq \ell \leq L\,,
    \end{equation*}
    so \eqref{eq:cond3} holds with $\mP_L := \mP_{\vpi_L}$.
    Next, \eqref{eq:cond1} rewrites (for $\ell = 1$) as
    \begin{equation*}
        \mP_{\vpi_{\ell + 1}} \cdot \begin{pmatrix}
            \mS_{(a_\ell, c_\ell, b_\ell, d_\ell)}\\
            \mS_{(a_\ell, c_\ell, b_\ell, d_\ell)}
        \end{pmatrix} = \mS_{(a_\ell, 2 c_\ell, b_\ell, d_\ell)}\,,
    \end{equation*}
    which is equivalent to \eqref{eq:cond_final} for $\ell=1$ by~\Cref{def:ksf} and \eqref{eq:permutation}.
    Now, denoting $\bar{\overline{\vpi}} := (a, 2 b, 2 c, d)$ for $\vpi := (a, b, c, d)$, since
    \begin{equation*}
        \mP_{\vpi_\ell} \cdot \begin{pmatrix}
            \mS_{\vpi_\ell} & \mS_{\vpi_\ell}\\
            \mS_{\vpi_\ell} & \mS_{\vpi_\ell}
        \end{pmatrix} = \begin{pmatrix}
            \mS_{\overline{\vpi}_\ell} & \mS_{\overline{\vpi}_\ell}
        \end{pmatrix}, \quad 2 \leq \ell \leq L\,,
    \end{equation*}
    we get that \eqref{eq:cond2} is equivalent to $\begin{pmatrix}
            \mS_{\overline{\vpi}_\ell} & \mS_{\overline{\vpi}_\ell}
        \end{pmatrix} \cdot \mP_{\vpi_{\ell + 1}}^\top = \mS_{\bar{\overline{\vpi}}_\ell}, \quad 2 \leq \ell \leq L - 1\,,$ which can be rewritten as
    \begin{equation*}
        \mP_{\vpi_{\ell + 1}} \cdot \begin{pmatrix}
            \mS_{(a_\ell, c_\ell, 2 b_\ell, d_\ell)}\\
            \mS_{(a_\ell, c_\ell, 2 b_\ell, d_\ell)}
        \end{pmatrix} = \mS_{(a_\ell, 2 c_\ell, 2 b_\ell, d_\ell)}, \quad 2 \leq \ell \leq L - 1\,,
    \end{equation*}
    or equivalently as \eqref{eq:cond_final} for $2 \leq \ell \leq L-1$.
   \end{IEEEproof}

This lemma turns conditions \eqref{eq:cond1}, \eqref{eq:cond2}, and \eqref{eq:cond3} on the permutation matrices into \eqref{eq:cond_final}, a condition {\em on the architecture} of the Kronecker-sparse factors.
To conclude, we turn this condition into three explicit conditions on the Kronecker-sparse patterns.

\begin{proposition}\label{prop:existFacto}
    Consider an architecture $(\vpi_1, \ldots, \vpi_L)$. The following are equivalent:
    \begin{itemize}
        \item for every $\ell \in \llbracket 1, L - 1 \rrbracket$ we have
            \begin{itemize}[\IEEEsetlabelwidth{C1.}]
                \item[C1.] $a_\ell c_\ell d_\ell = a_{\ell + 1} b_{\ell + 1} d_{\ell + 1}$\label{item:cond1};
                \item[C2.] $d_{\ell} | b_{\ell + 1} d_{\ell + 1}$\label{item:cond2};
                \item[C3.] $a_\ell | a_{\ell + 1}$\label{item:cond3}.
            \end{itemize}
        \item defining $\widetilde{\mK}_{\ell}$ as in~\eqref{eq:correct_ksf} with $\mP_\ell := \mP_{\vpi_\ell}$ as in \eqref{eq:permutation} and $\mK_{\ell} \in \cK_{\vpi_{\ell}}$, we have $\widetilde{\mK}_{\ell} \in \cK_{\vpi'_{\ell}}$, $1 \leq \ell \leq L$  with $\vpi_\ell'$ as in \eqref{eq:new_chain}.
    \end{itemize}
\end{proposition}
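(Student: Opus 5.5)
By the preceding lemma, and because \eqref{eq:cond3} always holds, the second item is equivalent to \eqref{eq:cond_final} holding for every $\ell \in \llbracket 1, L-1\rrbracket$. One small point needs care here. The lemma's conditions \eqref{eq:cond1}--\eqref{eq:cond3} express equality of the supports of the generic permuted factors, whereas the second item only asks for support inclusion, for all choices of factors. These agree because both sides of each condition are permuted copies of $0/1$ matrices with the same number of ones: the counting argument that led to \eqref{eq:new_chain} gives this count. Inclusion is therefore the same as equality. It thus suffices to show, for a fixed $\ell$ and writing $a=a_\ell$, $c=c_\ell$, $d=d_\ell$, $b'=b'_\ell$, $A=a_{\ell+1}$, $M=b_{\ell+1}d_{\ell+1}$, that
\begin{equation*}
(\mC_{A,2}\otimes \mI_M)\,(\mOne_{2\times 1}\otimes \mI_a\otimes \mOne_{c\times b'}\otimes \mI_d) = \mI_a\otimes \mOne_{2c\times b'}\otimes \mI_d
\end{equation*}
holds if and only if C1--C3 hold.

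The first step is to check that C1 is needed just for the product to be defined. The left factor has size $2AM$ and the right matrix has $2acd$ rows, so we need $AM = acd$, which is exactly C1. We assume C1 from now on and translate the identity into a statement about index maps. The matrix $\mOne_{2\times1}\otimes \mS$, with $\mS=\mI_a\otimes\mOne_{c\times b'}\otimes\mI_d$, stacks two copies of $\mS$. Row $(s,r)$, with $s\in\{0,1\}$ and $r\in\llbracket 0,acd-1\rrbracket$, has the support of row $r$ of $\mS$. That support is determined by the pair $(\lfloor r/(cd)\rfloor,\ r \bmod d)$, i.e.\ the block index and the $d$-residue.

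Next, $\mC_{A,2}\otimes\mI_M$ acts on rows as follows. Writing $r = iM + m$ with $i<A$ and $m<M$, it sends row $(s,i,m)$ to position $(i,s,m)$, that is, to $2iM + sM + m$. In the target $\mI_a\otimes\mOne_{2c\times b'}\otimes\mI_d$, row $t$ has support determined by $(\lfloor t/(2cd)\rfloor,\ t\bmod d)$. The identity therefore holds if and only if, for all $s,i,m$,
\begin{equation*}
\left\lfloor \tfrac{iM+m}{cd}\right\rfloor = \left\lfloor \tfrac{2iM+sM+m}{2cd}\right\rfloor
\quad\text{and}\quad
(iM+m)\bmod d = (2iM+sM+m)\bmod d .
\end{equation*}

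For sufficiency, C2 says $d\mid M$, so both sides of the residue condition equal $m\bmod d$. C3 says $a\mid A$. Combined with C1 this gives $cd\cdot(A/a) = M\cdot(\ldots)$; more usefully, $M = acd/A$ divides $cd$ because $A/a$ is an integer. The block $[iM, iM+M)$ thus lies inside a single $cd$-block, namely block $\lfloor iM/(cd)\rfloor$. Likewise the interval $[2iM, 2iM+2M)$ lies inside the single $2cd$-block with the same index. This proves the floor condition.

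For necessity, take $i=0$ and $s=1$, $m=0$ in the residue condition: it forces $M\equiv 0 \pmod d$, which is C2. For C3, suppose $M \nmid cd$. Then some block $[iM, iM+M)$ straddles a multiple of $cd$, and I will pick $m$ on either side of that multiple. The left floor changes across it, but I expect the doubled version to separate differently, which should give a contradiction with the floor equality for a suitable $s$. The cleaner route is to use $s=1$, $m=0$ with $i=0$: the condition $\lfloor M/(2cd)\rfloor = 0$ gives $M<2cd$, and comparing $i$ against $i$ at the boundary should show $M\mid cd$, i.e.\ $a\mid A$ by C1. I expect this necessity of C3 to be the main obstacle. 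Making the straddling argument rigorous, with the right choice of $(i,s,m)$, is the delicate part. The fallback is to count distinct supports among the rows $iM,\dots,iM+M-1$.
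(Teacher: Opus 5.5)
Your reduction to \eqref{eq:cond_final} is correct, including the remark that support inclusion equals support equality because both sides have the same number of ones. The derivation of C1 from dimension compatibility and the necessity of C2 via $(i,s,m)=(0,1,0)$ are also correct. Your sufficiency argument works, but by a different route from the paper. You use direct index bookkeeping: $M\mid cd$ puts each length-$M$ block inside one $cd$-block and each length-$2M$ block inside the matching $2cd$-block. The paper instead splits the all-ones block and applies the mixed-product and commutation identities.

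\textbf{The gap is the necessity of C3, which you do not prove.} You only sketch what you expect ("should give a contradiction", "should show $M\mid cd$"). The ``cleaner route'' you propose does not work on its own. The choice $(i,s,m)=(0,1,0)$ in the floor condition only gives $\lfloor M/(2cd)\rfloor=0$, i.e.\ $M<2cd$, and this does not force $M\mid cd$. For instance, $a=2$, $c=3$, $d=1$, $A=3$, $M=2$ satisfies C1, C2 and $M<2cd$, yet $a\nmid A$.

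The missing step is a single well-chosen row in your own floor criterion, and it is what the paper does: it takes the Euclidean division of $c_\ell d_\ell$ by $b_{\ell+1}d_{\ell+1}$ and shows the remainder must vanish. If $a=1$, C3 is trivial. If $a\ge 2$, then $cd<acd=AM$, so you may write $cd=iM+m$ with $0\le i<A$ and $0\le m<M$, and take $s=0$.
\begin{itemize}
\item The left floor is $\lfloor cd/(cd)\rfloor=1$.
\item The right floor is $\lfloor (2iM+m)/(2cd)\rfloor=\lfloor (2cd-m)/(2cd)\rfloor$, which equals $1$ only if $m=0$.
\end{itemize}
Hence $M\mid cd$. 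By C1, $cd/M=A/a$, so this is exactly C3.

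Your straddling idea also works once made precise. Take $s=0$ and $m$ with $iM+m=kcd$, where $kcd$ is the straddled multiple, so $iM<kcd$. The left floor is $k$, while the right floor is $\lfloor (iM+kcd)/(2cd)\rfloor\le k-1$.
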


\begin{IEEEproof}
    C1 is necessary and sufficient to ensure compatible matrix dimensions for matrix multiplication.\\
    {\em Sufficiency of C1-C2-C3.}
    We denote $r := \frac{b_{\ell + 1} d_{\ell + 1}}{d_\ell}$, which by C2 is an integer and by C1 is equal to $\frac{a_\ell c_\ell}{a_{\ell + 1}}$.
    Moreover, as $a_\ell | a_{\ell + 1}$ by C3, there exists an integer $p$ such that $p := \frac{a_{\ell + 1}}{a_\ell}$, so $c_\ell = p r$.
    Hence, we can decompose the matrix $\mOne_{c_\ell \times b_\ell'} = \mOne_{p \times b_\ell'} \otimes \mOne_{r \times 1}$ then use the mixed-product property of the Kronecker product, therefore the left-hand side of \eqref{eq:cond_final} can be rewritten as
    \begin{align*}
        &(\mC_{a_{\ell + 1}, 2} \otimes \mI_{d_{\ell + 1} b_{\ell + 1}}) \cdot
        \left( \mOne_{2 \times 1} \otimes \mI_{a_\ell} \otimes \mOne_{c_\ell \times b_\ell'} \otimes \mI_{d_\ell} \right)\\
            &= (\mC_{a_{\ell + 1}, 2} \otimes \mI_{d_{\ell + 1} b_{\ell + 1}}) \cdot \left( \mOne_{2 \times 1} \otimes \mI_{a_\ell} \otimes \mOne_{p \times b_\ell'} \otimes \mOne_{r \times 1} \otimes \mI_{d_\ell} \right)\\
            &= \mC_{a_{\ell + 1}, 2} \cdot \left( \mOne_{2 \times 1} \otimes \mI_{a_\ell} \otimes \mOne_{p \times b_\ell'} \right) \otimes \left( \mI_{d_{\ell + 1} b_{\ell + 1}} \cdot (\mOne_{r \times 1} \otimes \mI_{d_\ell}) \right)\\
            &= \mI_{a_\ell} \otimes \mOne_{p \times b_\ell'} \otimes \mOne_{2 \times 1} \otimes \mOne_{r \times 1} \otimes \mI_{d_\ell}\\
            &= \mI_{a_\ell} \otimes \mOne_{2 p \times b_\ell'} \otimes \mOne_{r \times 1} \otimes \mI_{d_\ell}
            = \mI_{a_\ell} \otimes \mOne_{2 c_\ell \times b_\ell'} \otimes \mI_{d_\ell}\,.
    \end{align*}

    {\em Necessity of C2\&C3.}
    With $\mX_\ell := \mOne_{2 \times 1} \otimes \mI_{a_\ell} \otimes \mOne_{c_\ell \times b_\ell'} \otimes \mI_{d_\ell}$, $\mY_\ell := \mI_{a_\ell} \otimes \mOne_{2 c_\ell \times b_\ell'} \otimes \mI_{d_\ell}$, $\ell \in \llbracket 1, L - 1 \rrbracket$, \eqref{eq:cond_final} reads $\mP_{\vpi_{\ell + 1}} \cdot \mX_\ell = \mY_\ell$.
    We continue by indexing the rows of $\mX_\ell$ by $(\varepsilon, i, k_1, k_2) \in \{0, 1 \} \times \llbracket 0, a_\ell - 1 \rrbracket \times \llbracket 0, c_\ell - 1 \rrbracket \times \llbracket 0, d_\ell - 1 \rrbracket$ via $\mathtt{idx}(\varepsilon, i, k_1, k_2) := \varepsilon (a_\ell c_\ell d_\ell) + i (c_\ell d_\ell) + k_1 d_\ell + k_2$, and the columns of $\mX_\ell$ (equivalently of $\mY_\ell$ because they live in the same column space of size $a_\ell b_\ell' d_\ell$) by an index defined using $(i', j, k_2') \in \llbracket 0, a_\ell - 1 \rrbracket \times \llbracket 0, b_\ell' - 1 \rrbracket \times \llbracket 0, d_\ell - 1 \rrbracket$ similarly.
    Since $\mX_\ell = \mOne_{2 \times 1} \otimes \mI_{a_\ell} \otimes \mOne_{c_\ell \times b_\ell'} \otimes \mI_{d_\ell}$
    \begin{equation*}
        \mX_\ell[(\varepsilon, i, k_1, k_2), (i', j, k_2')] = \delta_{i, i'} \delta_{k_2, k_2'}\,.
    \end{equation*}
    We now index the rows of $\mY_\ell$ by $(i, k_1', k_2) \in \llbracket 0, a_\ell - 1 \rrbracket \times \llbracket 0, 2 c_\ell - 1 \rrbracket \times \llbracket 0, d_\ell - 1 \rrbracket$ via $\mathtt{idx}'(i, k_1', k_2) := i (2 c_\ell d_\ell) + k_1' d_\ell + k_2$.
    Since $\mY_\ell = \mI_{a_\ell} \otimes \mOne_{2 c_\ell \times b_\ell'} \otimes \mI_{d_\ell}$, we have
    \begin{equation*}
        \mY_\ell[(i, k_1', k_2), (i', j, k_2')] = \delta_{i, i'} \delta_{k_2, k_2'}\,.
    \end{equation*}
    In particular, to each fixed $(i, k_2)$ we can associate a {\em distinct} row vector $\v{v}_{i,k_{2}}$ such that all rows of $\mX_\ell$ with second/last coordinates $(i, k_2)$ (there are $2 c_\ell$ of them indexed by $(\varepsilon, k_1)$) and all rows of $\mY_\ell$ with first/last coordinates $(i, k_2)$ (there are $2 c_\ell$ of them indexed by $k_1'$) are equal to $\v{v}_{i,k_{2}}$.
    Hence, denoting by $\rho_{\vpi}$ the row-permutation associated to $\mP_{\vpi}$, the equality $ \mP_{\vpi_{\ell + 1}} \mX_\ell = \mY_\ell$ implies (and is in fact equivalent to)
    \begin{equation}\label{eq:equivalence}
        \begin{split}
            &\rho_{\vpi_{\ell + 1}}\left( \{ \mathtt{idx}(\varepsilon, i, k_1, k_2) \}_{\varepsilon \in \{0, 1\}, k_1 \in \llbracket 0, c_\ell - 1 \rrbracket} \right)\\
            &= \{ \mathtt{idx}'(i, k_1', k_2) \}_{k_1' \in \llbracket 0, 2 c_\ell - 1 \rrbracket},\\
            &\forall (i, k_2) \in \llbracket 0, a_\ell - 1 \rrbracket \times \llbracket 0, d_\ell - 1 \rrbracket\,.
        \end{split}
    \end{equation}

    To prove C2, we use \eqref{eq:equivalence} with $i=k_2 = 0$.
    As the integers in $\{ \mathtt{idx}'(0, k_1', 0) \}_{k_1' \in \llbracket 0, 2 c_\ell - 1 \rrbracket}$ are multiples of $d_\ell$, $d_{\ell}$ divides $\rho_{\vpi_{\ell + 1}}\left( \mathtt{idx}(\varepsilon, 0, k_1, 0) \right)$. For $\varepsilon =1, k_1=0$, since  $\mathtt{idx}(1, 0, 0, 0) = a_\ell c_\ell d_\ell = a_{\ell + 1} b_{\ell + 1} d_{\ell + 1}$ (by C1), this means that $d_{\ell}$ divides $\rho_{\vpi_{\ell + 1}}\left( a_{\ell + 1} b_{\ell + 1} d_{\ell + 1}\right)$.
    By definition of $\mP_{\vpi_{\ell+1}}$ we have $\rho_{\vpi_{\ell + 1}}(a_{\ell + 1} b_{\ell + 1} d_{\ell + 1}) = b_{\ell + 1} d_{\ell + 1}$, this establishes C2.

    We use \eqref{eq:equivalence} with $i=1,\varepsilon=k_{1}=k_2 = 0$.
    As $\mathtt{idx}(0, 1, 0, 0) = c_\ell d_\ell$, we obtain as above that $\rho_{\vpi_{\ell + 1}}(c_{\ell}d_{\ell}) = (2c_{\ell}+k'_{1}) d_{\ell}$ for some $k_1' \in \llbracket 0, 2 c_\ell - 1 \rrbracket$.
    By definition of $\mP_{\vpi_{\ell+1}}$, $\rho_{\vpi_{\ell + 1}}(c_{\ell}d_{\ell})$ depends on the  Euclidean division of $c_{\ell}d_{\ell}$ by $b_{\ell+1}d_{\ell+1}$.
    By C2, $r:= b_{\ell+1}d_{\ell+1}/d_{\ell}$ is an integer, and using the Euclidean division $c_{\ell} = pr+q$, $p,q \in \bbN$, $0 \leq q<r$, with C1, we get $c_\ell d_\ell = p b_{\ell + 1} d_{\ell + 1} + q d_\ell$ and $q d_\ell < r d_\ell = b_{\ell + 1} d_{\ell + 1}$.
    It follows that $\rho_{\vpi_{\ell + 1}}(c_\ell d_\ell) = (2 p r + q) d_\ell$, hence $
    (2c_{\ell}+k'_{1})d_{\ell}=(2pr+q)d_{\ell}$, {\em i.e.}, $2c_{\ell}+k'_{1}=2pr+q=2(c_{\ell}-q)+q=2c_{\ell}-q$. Thus, $k'_{1}=q=0$, hence $r | c_{\ell}$, and since by C1 $r = a_{\ell}c_{\ell}/a_{\ell+1}$, we conclude that $a_{\ell} | a_{\ell+1}$.
    This establishes C3.\hfill \IEEEQEDhere
\end{IEEEproof}

\subsection{Consequences}
A first direct consequence of this proposition is the following corollary showing that every chainable architecture\footnote{This condition is sufficient to ensure that the product of two Kronecker-sparse factors is also a Kronecker-sparse factor. This condition corresponds to $\text{C1} \land \text{C2} \land \text{C3} \land (d_{\ell + 1} | d_\ell,\, \forall \ell)$.} satisfies the three conditions of the proposition.

\begin{corollary}
    If a matrix $\mA$ admits a Kronecker-sparse factorization with a chainable architecture \cite[Definition~4.12]{le2025butterfly}, then $\Re(\mA)$ also admits a Kronecker-sparse factorization with the new architecture $(\vpi_1', \ldots, \vpi_L')$ where $\vpi_\ell'$ is as in \eqref{eq:new_chain}.
\end{corollary}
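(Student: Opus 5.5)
The plan is to reduce the corollary to Proposition~\ref{prop:existFacto} together with the pre-factorization \eqref{eq:concat_ksf}. Let $\mA = \mK_1 \cdots \mK_L$ with $\mK_\ell \in \cK_{\vpi_\ell}$, where $(\vpi_1,\ldots,\vpi_L)$ is chainable in the sense of \cite[Definition~4.12]{le2025butterfly}. The case $L=1$ is immediate: $\Re(\mK_1)$ has support included in that of $\mK_1$. In that case one should read \eqref{eq:new_chain} as giving the pattern $\vpi_1$ itself, or note that the statement is intended for $L \geq 2$. So we assume $L \geq 2$.

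First, I recall (as in the footnote) that chainability amounts to C1 $\land$ C2 $\land$ C3 together with $d_{\ell+1} \mid d_\ell$ for all $\ell$. If one prefers to start from the original definition in \cite{le2025butterfly}, I would unpack it as follows. The definition says $a_\ell c_\ell d_\ell = a_{\ell+1} b_{\ell+1} d_{\ell+1}$, which is exactly C1. It also says $a_\ell \mid a_{\ell+1}$, which is C3. Finally it says $d_{\ell+1} \mid d_\ell$, together with a compatibility condition of the form $\frac{a_{\ell+1}}{a_\ell} \cdot \ldots$. From these, C2 follows by a short divisibility argument. Using C1, write $b_{\ell+1} d_{\ell+1} = \frac{a_\ell c_\ell d_\ell}{a_{\ell+1}}$. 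Chainability gives that $\frac{c_\ell}{a_{\ell+1}/a_\ell}$, or the analogous quotient appearing in the definition, is an integer, and then $d_\ell$ divides $b_{\ell+1}d_{\ell+1}$. Verifying that C2 indeed follows from the precise form of the chainability definition is the only step needing care, and it is the main potential obstacle. If the cited definition is phrased differently, I would instead deduce C2 from the relation $c_\ell d_\ell / (b_{\ell+1}d_{\ell+1}) = a_{\ell+1}/a_\ell \in \bbN$.

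Next, with C1--C3 in hand for every $\ell \in \llbracket 1, L-1 \rrbracket$, Proposition~\ref{prop:existFacto} applies. The permuted factors $\widetilde{\mK}_\ell$ of \eqref{eq:correct_ksf}, built with $\mP_\ell := \mP_{\vpi_\ell}$, therefore lie in $\cK_{\vpi'_\ell}$.

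Finally, I would chain the equalities. Equation \eqref{eq:concat_ksf} gives $\Re(\mA) = \overline{\mK}_1 \cdots \overline{\mK}_L$. Inserting $\mP_\ell^\top \mP_\ell = \mI$ between consecutive factors gives $\overline{\mK}_1 \cdots \overline{\mK}_L = \widetilde{\mK}_1 \cdots \widetilde{\mK}_L$. Hence $\Re(\mA)$ is a Kronecker-sparse factorization with architecture $(\vpi_1', \ldots, \vpi_L')$. The entries of every factor are real, since each $\overline{\mK}_\ell$ consists of real and imaginary parts and the $\mP_\ell$ are real permutations.
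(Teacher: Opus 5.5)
Your proposal is correct and follows the paper's route: chainability gives C1--C3 (the footnote's characterization), Proposition~\ref{prop:existFacto} then yields $\widetilde{\mK}_\ell \in \cK_{\vpi'_\ell}$, and $\Re(\mA) = \overline{\mK}_1 \cdots \overline{\mK}_L = \widetilde{\mK}_1 \cdots \widetilde{\mK}_L$ because the $\mP_\ell$ are orthogonal. One caveat: your fallback derivation of C2 from $c_\ell d_\ell / (b_{\ell+1} d_{\ell+1}) = a_{\ell+1}/a_\ell \in \bbN$ is invalid, since it only shows $b_{\ell+1} d_{\ell+1} \mid c_\ell d_\ell$ (for instance $a_\ell = c_\ell = b_{\ell+1} = d_{\ell+1} = 1$, $a_{\ell+1} = d_\ell = 2$ satisfies C1, C3 and $d_{\ell+1} \mid d_\ell$ but not C2), so C2 really rests on the integrality of $a_\ell c_\ell / a_{\ell+1} = b_{\ell+1} d_{\ell+1} / d_\ell$, which the chainability definition provides.
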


The chainability condition allows to obtain error guarantees on the Kronecker-sparse factorizations performed in \cite{le2025butterfly}, which is a significant advantage in practical applications.

A similar result holds for the imaginary part since $\Im(\mK_{1}\ldots \mK_{L}) = \Re((-\jmath \mK_{1})\ldots \mK_{L})$ and $\jmath \mK_{1}$ is Kronecker-sparse with the same pattern as $\mK_{1}$. In fact, the same holds for $\Re(z\mK_{1}\ldots\mK_{L})$ for any $z \in \bbC$.

A more concrete consequence is the existence of a fully real-valued Kronecker-sparse implementation of the DCT, DST, and DHT when the dimension is a power of two. It is indeed known \cite{le2025butterfly} (and easy to check) that the square dyadic butterfly factors appearing in the FFT (and the monarch factors) are a chainable architecture, and one can express \cite{makhoul1980Fast} the DCT/DST matrices as $\v{L} \Re(\mD \v{F}) \v{R}$, where $\v{L},\v{R}$ are real-valued, $\mD \in \{\mD_{C},\mD_{S}\}$ is diagonal, and $\v{F}$ is a DFT matrix, which can be written as $\v{F} = \mB \v{P}$ with $\v{B}$ a complex matrix with square dyadic butterfly factorization and $\v{P}$ the so-called bit-reversal permutation. As a result the DCT and the DST write as $\v{L} \Re(\mD\mB) \v{PR}$ and we conclude since $\mD \mB$ still has a square dyadic butterfly factorization. Since the DHT is a sum of the DCT and the DST, it similarly writes $\v{L} \Re((\mD_{C}+\mD_{S})\mB) \v{PR}$ which  admits a Kronecker-sparse factorization for the same reasons.

\section{Experiments}

Finally, we perform some experiments to evaluate the inference time of the DCT applied to a vector $\v{x} \in \bbR^{n \times b}$, where $b$ is the batch size, with the real-valued Kronecker-sparse factorization.
Considering a batch of vectors is relevant in practice, since the DCT is rarely applied to a single vector but rather to large batches of data: for instance, image compression standards such as JPEG apply the DCT to every $8 \times 8$ block of an image \cite{wallace1992jpeg}, and audio coders apply the DCT to every frame of a signal \cite{princen1986analysis}.
The code is available in the supplementary material.

\begin{figure}[!t]
    \centering
    \includegraphics[width=0.9\columnwidth]{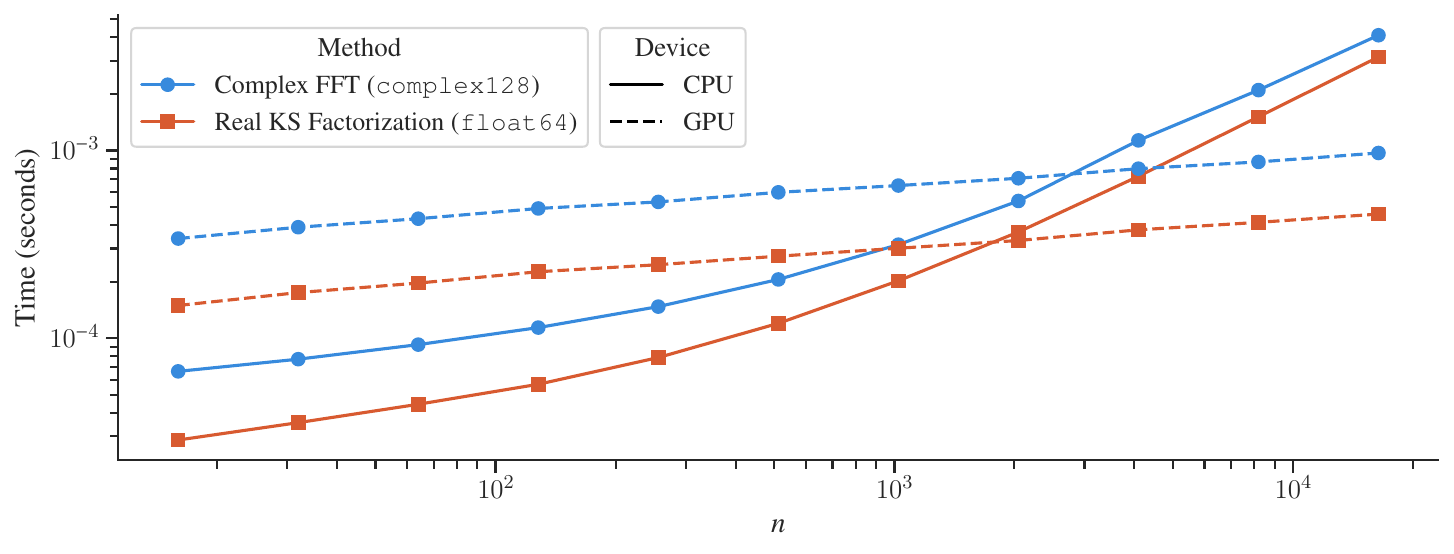}
    \\
    \includegraphics[width=0.9\columnwidth]{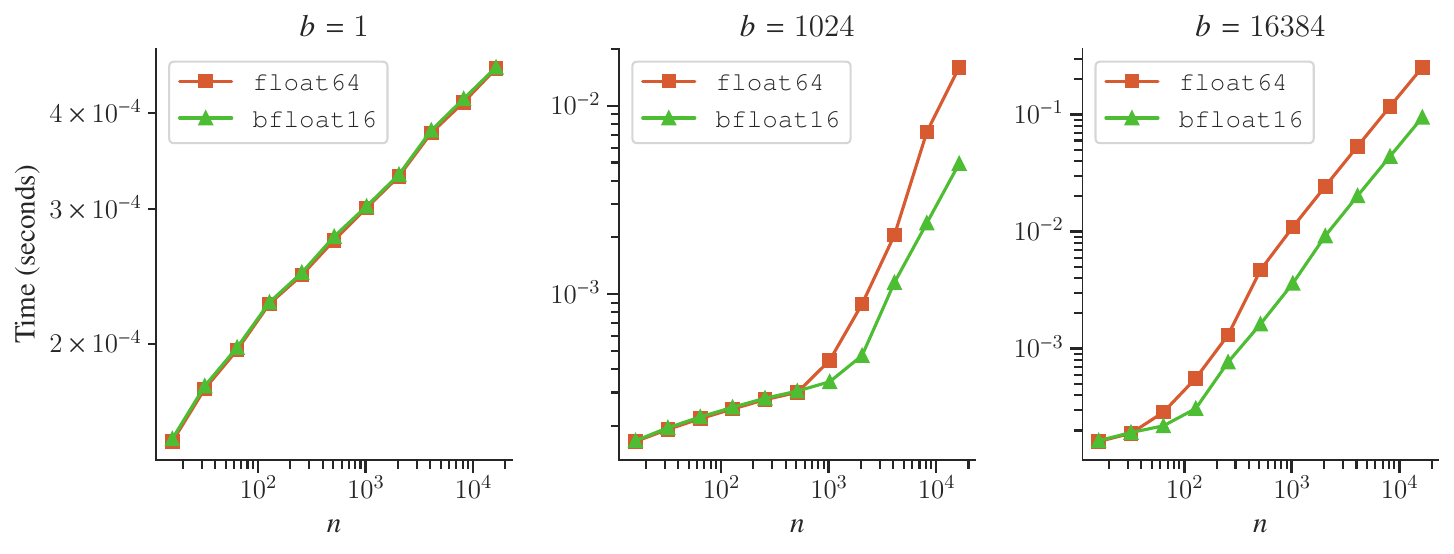}
    \caption{Inference time of the DCT applied to a vector $\v{x} \in \bbR^{n \times b}$ in terms of $n$.
    ({\em Up}) Comparison, on CPU and GPU, with $b = 1$ of two implementations: the one provided by \cite{makhoul1980Fast} based on the complex-valued FFT and the real-valued factorization proposed in this letter.
    ({\em Down}) Real-valued factorization on GPU for different batch sizes $b$ and different floating-point formats.}
    \label{fig:timingDCT}
\end{figure}

As illustrated in \figurename~\ref{fig:timingDCT} ({\em Up}), on CPU and GPU, we observe that the real-valued factorization is faster for every size of the input (e.g., $2$ times faster on GPU), in terms of computation time, than the one provided by \cite{makhoul1980Fast}, which performs a complex-valued discrete Fourier transform.

Moreover our factorization is based on real-valued matrices which, on GPUs, can be computed faster with standard low-precision formats such as \texttt{float16} or \texttt{bfloat16}.
This difference is more pronounced when the batch size is increased.
Indeed, as illustrated in \figurename~\ref{fig:timingDCT} ({\em Down}), while the inference time in \texttt{bfloat16} is equivalent to the one in \texttt{float64} when $b = 1$, for larger $b$ there is a dimension threshold beyond which the inference time in \texttt{bfloat16} is significantly reduced compared to \texttt{float64}.
Since naive rounding to such lower-precision formats (as done in these experiments) comes with some loss of accuracy, recent works have developed methods to mitigate this impact and improve over naive rounding \cite{gribonval2023scaling}.

\section{Conclusion}
\label{sec:conclusion}

In this letter, we addressed the question of whether the real or the imaginary part of a product of structured-sparse matrices, called Kronecker-sparse factors, retains the same structure.
We provided, under a choice of permutation matrices, a necessary and sufficient condition on the factors of the product to ensure this property.
As a direct consequence, we proved that the Discrete Cosine/Sine/Hartley Transforms all admit a real-valued Kronecker-sparse factorization, which, to the best of our knowledge, is a new result in the literature.
This result is of practical interest and has been implemented in the Python library \texttt{lazylinop}\footnote{\url{https://lazylinop.inria.fr}} for the design of efficient algorithms for these transforms.

\section*{Acknowledgement}

The Lean formalization provided in the supplementary material was generated with the assistance of Claude (Anthropic).

\newpage
\IEEEtriggeratref{14}
\bibliographystyle{IEEEtran}
\bibliography{IEEEabrv,biblio}

\end{document}